\numberwithin{equation}{section}
\newtheorem{lemma}{Lemma}
\begin{document}

\vspace{1in}

\title[On spherical harmonics possessing octahedral symmetry]{\bf On spherical harmonics possessing octahedral symmetry}


\author[Yu. Nesterenko]{Yu. Nesterenko }
\address{ Mechanical Analysis Division \\
 Mentor, a Siemens Business }
\email{Yuri\_Nesterenko@mentor.com}


\begin{abstract}
In this paper, we present the implicit equations for one special class of real-valued spherical harmonics with octahedral symmetry.
Based on this representation, we construct the rotationally invariant measure of deviation from the specified symmetry.
The spherical harmonics we consider have some applications in the area of directional fields design due to their
ability to represent mutually orthogonal axes in 3D space, not relative to their order and orientation.
\end{abstract}

\maketitle

\thispagestyle{empty}

\section{Introduction}

We consider real-valued spherical harmonics of degree 4 on the unit sphere.
These functions form 9D space with standard orthonormal basis $Y_{4,-4}, \ldots, Y_{4,4}$ (see \cite{Gorller1996}).
\begin{center}
\begin{figure}[h]
\includegraphics[width=13.0cm]{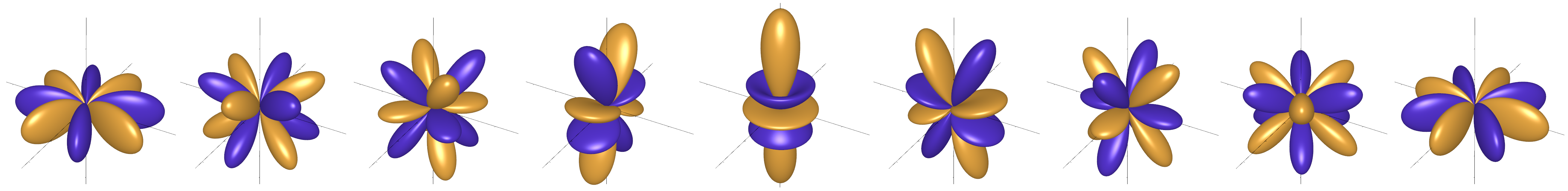}
\caption{ Spherical plots of basis functions $Y_{4,-4}, \ldots, Y_{4,4}$. }
\label{fig:basis}
\end{figure}
\end{center}

The object of study in this work is the 3D manifold of all normalized spherical harmonics possessing octahedral symmetry.
Up to multiplication by $-1$ all harmonics of this kind may be obtained from the reference harmonic $\tilde{h}$ with coordinates
\begin{equation*}
\tilde{a} = (0,0,0,0,\sqrt{\frac{7}{12}},0,0,0,\sqrt{\frac{5}{12}})^T \in \mathds{R}^9,
\end{equation*}
by rotations
\begin{equation*}
a = R_x(\alpha) \times R_y(\beta) \times R_z(\gamma) \times \tilde{a},
\end{equation*}
where $\alpha, \beta, \gamma$ are Euler angles (as well as in cases of different degrees the space we consider has an important property:
it is closed under 3D rotations, i.e. applying a rotation to a harmonic of degree 4 produces another harmonic of the same degree).
Appendix A.1 describes the construction of the rotation matrices $R_x$, $R_y$ and $R_z$.
 \begin{center}
\begin{figure}
\includegraphics[width=8.0cm]{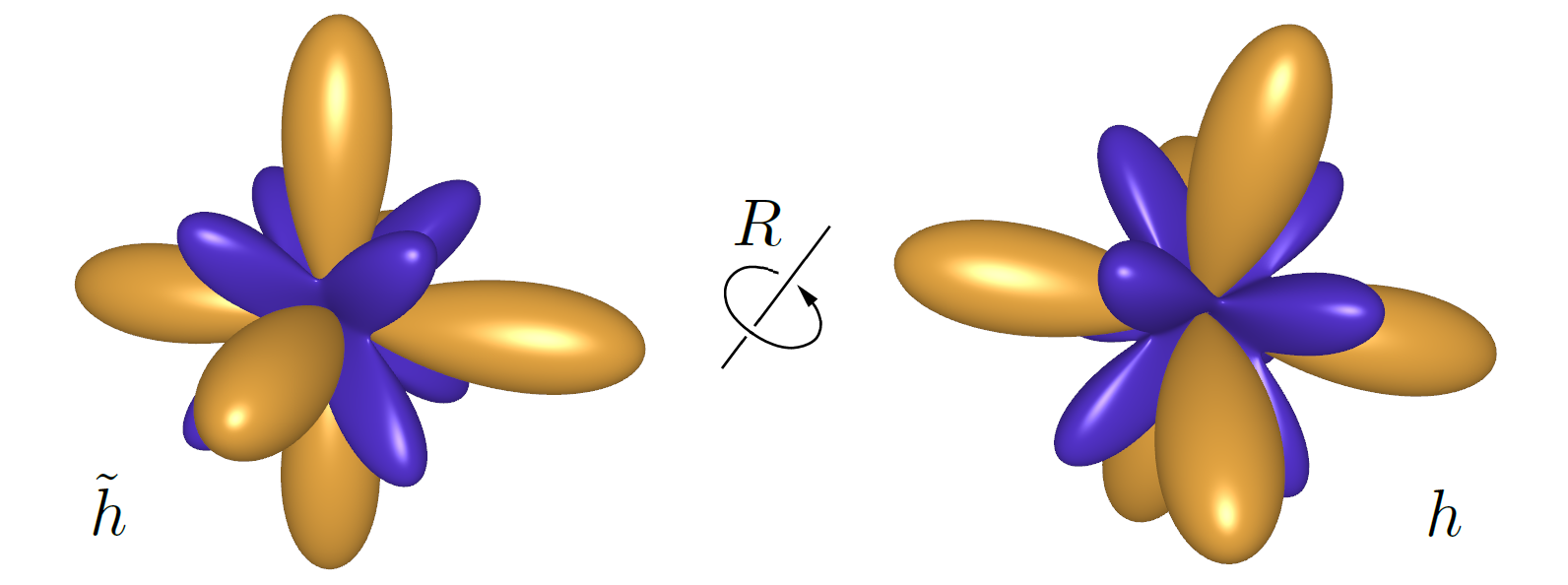}
\caption{ The reference harmonic and its rotation. }
\label{fig:def}
\end{figure}
\end{center}
From geometrical point of view, $a(\alpha, \beta, \gamma)$ is constrained to be on
the manifold of dimension 3 embedded in the $\mathds{R}^9$.
Appendix A.2 contains description of topology of this manifold.

The need of effective manipulation with the spherical harmonics of the specified class was recently realized in context of the problem of directional fields smoothing (see \cite{Beaufort2019,Huang2011,Palmer2020,Ray2015,Solomon2017})
due to their ability to represent mutually orthogonal axes in 3D space, not relative to their order and orientation.

\section{Results}
The next lemma claims that harmonics manifold which we study is simply intersection of quadrics (hypersurfaces of the second order).
\begin{lemma}\label{l1}
The manifold of all normalized spherical harmonics possessing octahedral symmetry is given by the system of equations
\begin{equation}\label{inp}
\left\{ \begin{split}
&a^T \, a = 1, \\
&a^T \, S_k \, a = 0, \quad k = 1,\ldots,5,
\end{split}\right.
\end{equation}
where $S_1, \ldots, S_5$ are the symmetric matrices defined as follows.
\begin{Small}
\begin{equation}\label{s1}
S_1 = \sqrt{2}
\begin{bmatrix}
 28           & 0             & 0             & 0             & 0             & 0             & 0             & 0             & 0              \\
0             &  7            & 0             & 0             & 0             & 0             & 0             & 0             & 0              \\
0             & 0             & -8            & 0             & 0             & 0             & 0             & 0             & 0              \\
0             & 0             & 0             & -17           & 0             & 0             & 0             & 0             & 0              \\
0             & 0             & 0             & 0             & -20           & 0             & 0             & 0             & 0              \\
0             & 0             & 0             & 0             & 0             & -17           & 0             & 0             & 0              \\
0             & 0             & 0             & 0             & 0             & 0             & -8            & 0             & 0              \\
0             & 0             & 0             & 0             & 0             & 0             & 0             &  7            & 0              \\
0             & 0             & 0             & 0             & 0             & 0             & 0             & 0             &  28
\end{bmatrix}
\end{equation}

\begin{equation}\label{s2}
S_2 = \sqrt{3}
\begin{bmatrix}
0             &  14           & 0             & 0             & 0             & 0             & 0             & 0             & 0              \\
 14           & 0             &  5\sqrt{7}    & 0             & 0             & 0             & 0             & 0             & 0              \\
0             &  5\sqrt{7}    & 0             &  9            & 0             & 0             & 0             & 0             & 0              \\
0             & 0             &  9            & 0             & 0             & 0             & 0             & 0             & 0              \\
0             & 0             & 0             & 0             & 0             &  2\sqrt{5}    & 0             & 0             & 0              \\
0             & 0             & 0             & 0             &  2\sqrt{5}    & 0             &  9            & 0             & 0              \\
0             & 0             & 0             & 0             & 0             &  9            & 0             &  5\sqrt{7}    & 0              \\
0             & 0             & 0             & 0             & 0             & 0             &  5\sqrt{7}    & 0             &  14            \\
0             & 0             & 0             & 0             & 0             & 0             & 0             &  14           & 0
\end{bmatrix}
\end{equation}

\begin{equation}\label{s3}
S_3 = \sqrt{3}
\begin{bmatrix}
0             & 0             & 0             & 0             & 0             & 0             & 0             &  14           & 0              \\
0             & 0             & 0             & 0             & 0             & 0             &  5\sqrt{7}    & 0             & -14            \\
0             & 0             & 0             & 0             & 0             &  9            & 0             & -5\sqrt{7}    & 0              \\
0             & 0             & 0             & 0             &  2\sqrt{5}    & 0             & -9            & 0             & 0              \\
0             & 0             & 0             &  2\sqrt{5}    & 0             & 0             & 0             & 0             & 0              \\
0             & 0             &  9            & 0             & 0             & 0             & 0             & 0             & 0              \\
0             &  5\sqrt{7}    & 0             & -9            & 0             & 0             & 0             & 0             & 0              \\
 14           & 0             & -5\sqrt{7}    & 0             & 0             & 0             & 0             & 0             & 0              \\
0             & -14           & 0             & 0             & 0             & 0             & 0             & 0             & 0
\end{bmatrix}
\end{equation}

\begin{equation}\label{s4}
S_4 = \sqrt{6}
\begin{bmatrix}
0             & 0             &  2\sqrt{7}    & 0             & 0             & 0             & 0             & 0             & 0              \\
0             & 0             & 0             &  3\sqrt{7}    & 0             & 0             & 0             & 0             & 0              \\
 2\sqrt{7}    & 0             & 0             & 0             & 0             & 0             & 0             & 0             & 0              \\
0             &  3\sqrt{7}    & 0             &  10           & 0             & 0             & 0             & 0             & 0              \\
0             & 0             & 0             & 0             & 0             & 0             &  6\sqrt{5}    & 0             & 0              \\
0             & 0             & 0             & 0             & 0             & -10           & 0             &  3\sqrt{7}    & 0              \\
0             & 0             & 0             & 0             &  6\sqrt{5}     & 0             & 0             & 0             &  2\sqrt{7}    \\
0             & 0             & 0             & 0             & 0             &  3\sqrt{7}    & 0             & 0             & 0              \\
0             & 0             & 0             & 0             & 0             & 0             &  2\sqrt{7}    & 0             & 0
\end{bmatrix}
\end{equation}

\begin{equation}\label{s5}
S_5 = \sqrt{6}
\begin{bmatrix}
0             & 0             & 0             & 0             & 0             & 0             & 2\sqrt{7}     & 0             & 0              \\
0             & 0             & 0             & 0             & 0             & 3\sqrt{7}     & 0             & 0             & 0              \\
0             & 0             & 0             & 0             & 6\sqrt{5}     & 0             & 0             & 0             & -2\sqrt{7}     \\
0             & 0             & 0             & 0             & 0             & -10           & 0             & -3\sqrt{7}    & 0              \\
0             & 0             & 6\sqrt{5}     & 0             & 0             & 0             & 0             & 0             & 0              \\
0             & 3\sqrt{7}     & 0             & -10           & 0             & 0             & 0             & 0             & 0              \\
2\sqrt{7}     & 0             & 0             & 0             & 0             & 0             & 0             & 0             & 0              \\
0             & 0             & 0             & -3\sqrt{7}    & 0             & 0             & 0             & 0             & 0              \\
0             & 0             & -2\sqrt{7}    & 0             & 0             & 0             & 0             & 0             & 0
\end{bmatrix}
\end{equation}
\end{Small}

\end{lemma}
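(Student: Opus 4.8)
\emph{Proof plan.}
Write $\rho$ for the $9$-dimensional orthogonal representation of $SO(3)$ carried by the coordinate vector $a=(a_{4,-4},\dots ,a_{4,4})^{T}$ (assembled from $R_x,R_y,R_z$ as in Appendix A.1), and let $V_j$ denote the irreducible $(2j+1)$-dimensional representation of $SO(3)$, so that the coordinate space is $V_4$. By the discussion preceding the lemma the manifold in question is $M:=\{\pm\,\rho(R)\,\tilde a : R\in SO(3)\}$. Denoting by $V$ the algebraic set defined by \eqref{inp}, the plan is to prove $M\subseteq V$ and $V\subseteq M$ separately, the second inclusion being the substantive one.

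\emph{Step 1: $M\subseteq V$.} First I would check that $\tilde a$ itself solves \eqref{inp}. Normalization is immediate: $\tilde a^{T}\tilde a=\tfrac7{12}+\tfrac5{12}=1$. Since $\tilde a$ is supported only on the coordinates $m=0$ and $m=4$ (positions $5$ and $9$), $\tilde a^{T}S_k\tilde a$ depends only on the $2\times2$ principal submatrix of $S_k$ on rows/columns $5,9$; for $S_2,S_3,S_4,S_5$ that submatrix is identically zero, and for the diagonal matrix $S_1$ it produces the cancellation $\tfrac7{12}(-20\sqrt2)+\tfrac5{12}(28\sqrt2)=0$. Next, the system \eqref{inp} is $\rho$-invariant: as $\rho(R)$ is orthogonal, $a\mapsto\rho(R)a$ preserves $a^{T}a$ and sends $a^{T}S_ka$ to $a^{T}\bigl(\rho(R)^{T}S_k\rho(R)\bigr)a$, so it suffices that the $5$-dimensional span $W=\langle S_1,\dots ,S_5\rangle$ be stable under $S\mapsto\rho(R)^{T}S\rho(R)$. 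This can be verified infinitesimally through the conditions $[\,S_k,\,d\rho(J)\,]\in W$ for the three generators $J$ of $\mathfrak{so}(3)$ (fifteen $9\times9$ commutators); conceptually, each $S_k$ is traceless (so $W\perp\mathds{R}\,\mathrm{Id}$) and $W$ is exactly the unique copy of the spin-$2$ summand in $\mathrm{Sym}^{2}(V_4)\cong V_0\oplus V_2\oplus V_4\oplus V_6\oplus V_8$. Granting $\rho$-invariance of $W$, the set $V$ is $\rho$-invariant and also invariant under $a\mapsto-a$, so $\tilde a\in V$ gives $M\subseteq V$. (Equivalently: $q(a):=(a^{T}S_1a,\dots ,a^{T}S_5a)$ is, up to scale, the unique $SO(3)$-equivariant quadratic map $V_4\to V_2$; since $q(\tilde a)$ is an octahedrally invariant vector of $V_2$ and $V_2$ carries no nonzero octahedral invariant, $q(\tilde a)=0$, hence $q\equiv 0$ on the orbit.)

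\emph{Step 2: $V\subseteq M$ (the main obstacle).} I would begin with a smoothness computation. The Jacobian of the six left-hand sides of \eqref{inp} has rows $a^{T}$ and $a^{T}S_1,\dots ,a^{T}S_5$; evaluated at $\tilde a$ these six vectors are supported on the coordinate blocks $\{5,9\},\{5,9\},\{6,8\},\{2,4\},\{7\},\{3\}$, and one checks at once that the two vectors on $\{5,9\}$ are independent and that the blocks are otherwise disjoint, so the Jacobian has rank $6$ at $\tilde a$. By the $\rho$-invariance of \eqref{inp} the rank is $6$ along all of $M$; hence $V$ is a smooth $3$-manifold near $M$, and since $M$ is compact, is contained in $V$, and is itself a smooth $3$-manifold, $M$ is open and closed in $V$, i.e.\ a union of connected components of $V$. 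Moreover no point of $V$ has a positive-dimensional $SO(3)$-stabilizer (an axially symmetric normalized degree-$4$ harmonic is $\pm Y_{4,0}=\pm e_5$, and $e_5^{T}S_1e_5=-20\sqrt2\neq0$), so every $SO(3)$-orbit inside $V$ is $3$-dimensional; since $\tilde h$ and $-\tilde h$ have different extremal values on the sphere they lie in distinct orbits, so $M$ already supplies two components of $V$. The crux is then to exclude any further component: as $V$ is cut out of the $8$-sphere by five quadrics, every component has dimension $\ge3$, so one must rule out an additional $3$-dimensional component, necessarily a single $SO(3)$-orbit on the sphere. I expect this to require an explicit elimination: using the three rotational degrees of freedom to normalize $a$ into a $6$-dimensional coordinate slice transverse to the orbits through $\tilde a$, system \eqref{inp} becomes six polynomial equations in six unknowns whose full solution set one computes by a Gröbner-basis calculation — the prefactors $\sqrt2,\sqrt3,\sqrt6$ drop out of the zero set, and the internal $\sqrt5,\sqrt7$ are handled by working over $\mathds{Q}(\sqrt5,\sqrt7)$ — and verifies to reduce, modulo the residual octahedral symmetry, to $\pm\tilde a$ alone. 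An alternative is to obtain the ideal of $M$ directly by eliminating the Euler angles from $a=R_x(\alpha)\,R_y(\beta)\,R_z(\gamma)\,\tilde a$ and to check it has the same zero set as \eqref{inp}. In either route the conceptual content is exhausted by Step 1, and the remaining work is a finite, if laborious, computer-algebra verification.
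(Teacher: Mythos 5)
The paper itself offers only a two-sentence ``idea of proof'' (the implicit form was found via a rational parametrization and a Gr\"obner-basis elimination, and ``can be verified by direct calculations''), so your proposal should be judged as an expansion of that sketch rather than against a worked-out argument. Your plan is sound and in fact adds genuine content the paper omits: the identification of $\mathrm{span}(S_1,\dots,S_5)$ with the unique $V_2$-summand of $\mathrm{Sym}^2(V_4)$, the observation that $q(a)=(a^TS_1a,\dots,a^TS_5a)$ is then the essentially unique equivariant quadratic map $V_4\to V_2$ and must vanish on the orbit because $V_2$ contains no octahedral invariant, and the Jacobian-rank computation showing $M$ is a union of components of $V$ --- these are all correct (I checked the support patterns $\{5,9\},\{5,9\},\{6,8\},\{2,4\},\{7\},\{3\}$ and the cancellation $\tfrac{7}{12}(-20)+\tfrac{5}{12}(28)=0$) and give a conceptual explanation of the inclusion $M\subseteq V$ that the paper only asserts. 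For the converse inclusion you end, exactly as the paper does, by deferring to an unexecuted Gr\"obner-basis elimination; that computation is where all the remaining work lives, so your proposal is no more complete than the paper's at the decisive point, but it is no less complete either, and it correctly isolates what must be computed. Two small caveats: the assertion that every component of $V$ has dimension $\ge 3$ because $V$ is cut out by five quadrics in the $8$-sphere is a statement about complex varieties and can fail for real algebraic sets (a real component may be lower-dimensional or even a point), so the elimination must be run over the reals without assuming extra components are $3$-dimensional a priori; and the claim that $W$ is exactly the $V_2$-summand is itself one of the ``direct calculations'' (e.g.\ fifteen commutators with the $\mathfrak{so}(3)$ generators) that still has to be carried out before Step 1 is complete.
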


\emph{Idea of proof.}
The given implicit equations were obtained by the standard technique based on rational parametrization of the unit circle and Gr\"obner basis construction.
It can be verified by direct calculations.

\quad

Matrices (\ref{s1}) - (\ref{s5}) were chosen among their different possible linear combinations based on the following additional consideration.
\begin{lemma}\label{l2}
Real-valued function
\begin{equation}\label{dev}
d(a) = \sum_{k = 1}^{5} (a^T \, S_k \, a)^2,
\end{equation}
where $a \in \mathds{R}^9$, defines rotationally invariant measure of harmonic's deviation from octahedral symmetry.
\end{lemma}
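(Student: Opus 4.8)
The plan is to read $d$ as the squared length of an $SO(3)$-equivariant quadratic map and to reduce the statement to a couple of finite computations with \eqref{s1}--\eqref{s5}. Throughout, $V_{\ell}$ denotes the irreducible $SO(3)$-module of dimension $2\ell+1$; in particular $V_{4}$, of dimension $9$, is the space of degree-$4$ real spherical harmonics, and for $a\in V_{4}$ and a rotation $R$ we write $Ra$ for the induced action (in coordinates, the $9\times 9$ orthogonal matrix built from the Euler angles as in Appendix~A.1). That $d$ is a genuine measure of deviation follows at once from Lemma~\ref{l1}: $d(a)\ge 0$, and $d(a)=0$ exactly when $a^{T}S_{k}a=0$ for every $k$; since that locus is a cone, Lemma~\ref{l1} identifies it with the cone over the manifold \eqref{inp}, so among unit vectors $d$ vanishes precisely on the octahedrally symmetric harmonics. (One inclusion is even visible a priori: if the stabiliser of $a$ contains a copy of the octahedral group $O$, then $(a^{T}S_{1}a,\dots,a^{T}S_{5}a)$ is fixed by that copy, while the $5$-dimensional $SO(3)$-module spanned by the $S_{k}$, identified below with $V_{2}$, carries no nonzero $O$-invariant, so $d(a)=0$; the reverse inclusion is the substance of Lemma~\ref{l1}.) So the real content is rotational invariance, $d(Ra)=d(a)$.

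I would put $v(a)=(a^{T}S_{1}a,\dots,a^{T}S_{5}a)^{T}$, so $d(a)=v(a)^{T}v(a)$. Since $R$ is orthogonal, $(Ra)^{T}S_{k}(Ra)=a^{T}(R^{T}S_{k}R)a$, so everything is governed by the conjugation action $S\mapsto R^{T}SR$ on the space of symmetric $9\times 9$ matrices, which as an $SO(3)$-module is $\mathrm{Sym}^{2}V_{4}\cong V_{0}\oplus V_{2}\oplus V_{4}\oplus V_{6}\oplus V_{8}$ by Clebsch--Gordan, each summand of dimension $1,5,9,13,17$ occurring once. Step (i) is to show that $W:=\operatorname{span}(S_{1},\dots,S_{5})$ is invariant under this action; as the $S_{k}$ are clearly linearly independent, $\dim W=5$, and the only $5$-dimensional invariant subspace is the copy of $V_{2}$, so $W$ must be that copy. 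Then $R^{T}S_{k}R=\sum_{j}A(R)_{jk}S_{j}$ for some $A(R)$, giving $v(Ra)=A(R)^{T}v(a)$ and $d(Ra)=v(a)^{T}A(R)A(R)^{T}v(a)$. Step (ii) is to show $A(R)$ is orthogonal. For this I would use that the Frobenius form $\langle S,T\rangle=\operatorname{tr}(ST)$ is preserved by conjugation with an orthogonal matrix, hence restricts to an $SO(3)$-invariant inner product on the irreducible module $W$, so that $A(R)$ is orthogonal with respect to the Gram matrix $G_{k\ell}=\operatorname{tr}(S_{k}S_{\ell})$; and if $G$ is a scalar multiple of the identity, then $A(R)$ is orthogonal in the ordinary sense, whence $d(Ra)=v(a)^{T}v(a)=d(a)$.

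It then remains to carry out two explicit verifications on \eqref{s1}--\eqref{s5}. For Step (i) it suffices to check stability of $W$ under the three infinitesimal generators $L_{x},L_{y},L_{z}\in\mathfrak{so}(3)$ obtained by differentiating $R_{x},R_{y},R_{z}$ at the identity, i.e.\ that $S_{k}L_{\bullet}-L_{\bullet}S_{k}\in W$ for each $k$ and each generator --- fifteen finite matrix products. (Alternatively, $W$ is invariant for a structural reason: the cone over the manifold \eqref{inp} is $SO(3)$-invariant, so the space of quadratic forms vanishing on it is an invariant subspace; it contains $W$ by Lemma~\ref{l1} and, as the Gr\"obner computation underlying Lemma~\ref{l1} shows, it equals $W$, which is therefore invariant.) For Step (ii) one computes $\operatorname{tr}(S_{k}S_{\ell})=c\,\delta_{k\ell}$ directly: the off-diagonal terms cancel, and all diagonal entries agree --- for instance $\operatorname{tr}(S_{1}^{2})=\operatorname{tr}(S_{2}^{2})=5544$ --- which is exactly why the scalar factors $\sqrt{2},\sqrt{3},\sqrt{3},\sqrt{6},\sqrt{6}$ were put in front of the matrices in \eqref{s1}--\eqref{s5}. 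Granting both checks, the displayed chain of equalities gives $d(Ra)=d(a)$.

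The conceptual skeleton --- that $a\mapsto(a^{T}S_{1}a,\dots,a^{T}S_{5}a)$ is, up to this normalisation, the unique $SO(3)$-equivariant quadratic map $V_{4}\to V_{2}$, and that the squared norm of an equivariant map is automatically invariant --- is short, and invariance of $W$ is essentially forced once one trusts the Gr\"obner output behind Lemma~\ref{l1}. I therefore expect the main obstacle to be the elementary but unavoidable computation of Step (ii), namely that the chosen $S_{k}$ are mutually Frobenius-orthogonal of equal norm: this is the one place where the explicit entries of the matrices genuinely matter, since for a generic basis of the $V_{2}$-summand the zero set $\{d=0\}$ would still be rotation-invariant but the \emph{function} $d$ would not be.
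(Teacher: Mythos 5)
Your proposal is correct, and it is substantially more of a proof than the paper offers: the paper's ``idea of proof'' merely records that $d$ was \emph{obtained} by averaging a trial non-invariant measure over $SO(3)$ (which makes invariance automatic for the averaged object but is never turned into an argument for the explicit formula) and then defers to unspecified ``direct calculations.'' Your route is genuinely different and more illuminating: you locate $W=\operatorname{span}(S_1,\dots,S_5)$ as the unique $V_2$-summand of $\mathrm{Sym}^2V_4\cong V_0\oplus V_2\oplus V_4\oplus V_6\oplus V_8$, observe that $a\mapsto(a^TS_ka)_k$ is then $SO(3)$-equivariant, and reduce invariance of $d$ to the single finite check that the $S_k$ are Frobenius-orthogonal of equal norm (your value $\operatorname{tr}(S_1^2)=\operatorname{tr}(S_2^2)=5544$ is correct, and the same holds for $S_4$), which also explains the otherwise mysterious prefactors $\sqrt2,\sqrt3,\sqrt6$. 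Your treatment of the ``measure of deviation'' clause via Lemma~\ref{l1} and the absence of octahedral invariants in $V_2$ is likewise sound. The only caveats are that both of your finite verifications must actually be carried out --- the invariance of $W$ under the three infinitesimal generators (or, if you prefer the structural alternative, the fact that the degree-two part of the ideal of the cone is exactly $5$-dimensional, which does require the Gr\"obner output and not just the statement of Lemma~\ref{l1}) and the full Gram matrix $\operatorname{tr}(S_kS_\ell)=5544\,\delta_{k\ell}$ --- but these are routine computations of exactly the kind the paper itself waves at, and your argument organizes them so that nothing else is needed.
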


\emph{Idea of proof.}
The result was obtained by averaging of some trial non-invariant deviation measure over $SO_3$ group.
As in the previous lemma, rotational invariance of the specified function can be verified by direct calculations.

\section{Numerical example}
In this section we show how deviation measure (\ref{dev}) works.
We use the combination of the scale controlling term and the symmetry deviation term with positive weights $w_1$ and $w_2$
\begin{equation}\label{energy}
p(a; w_1, w_2) = w_1 (a^T \, a - 1)^2 + w_2 \sum_{k = 1}^{5} (a^T \, S_k \, a)^2, \quad w_1, \, w_2 > 0,
\end{equation}
as the penalty function for simple gradient descent method.
Figure \ref{fig:harms} shows the convergence process of the sample initial harmonic to the symmetrical one.
\begin{center}
\begin{figure}
\includegraphics[width=12cm]{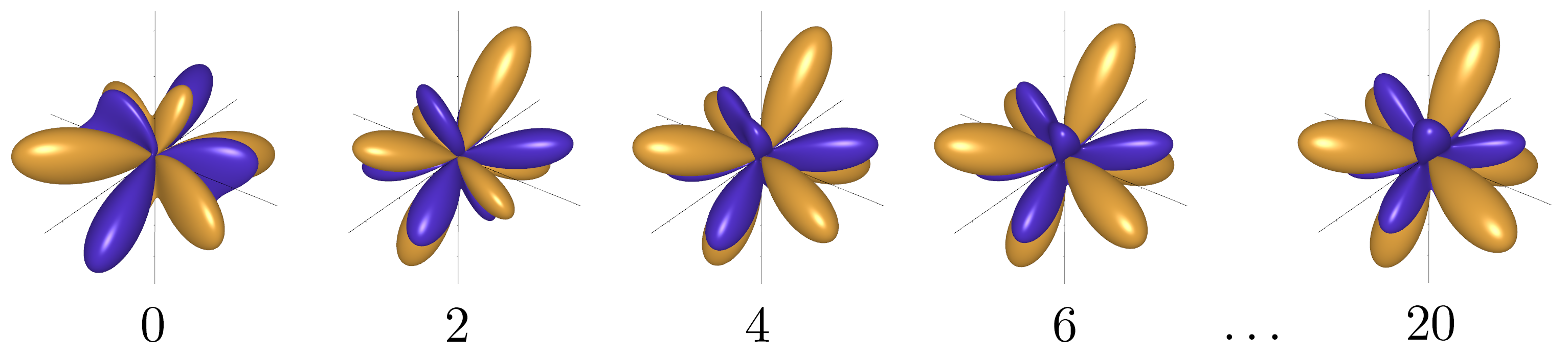}
\caption{ Iterative symmetrization }
\label{fig:harms}
\end{figure}
\end{center}

The plots below describe the distance to the nearest symmetrical harmonic and square root of the penalty value.
One can see distance-like behavior of square root of $p(a; w_1, w_2)$.

\quad

\begin{center}
\begin{tikzpicture}
    \begin{axis}[
        height=0.45\textwidth,
        width=0.7\textwidth,
        xlabel=Iterations,
        xtick={0,2,...,20},
        ylabel=Distance measures,
        ymode=log,
       log basis y={10}
]
        \addplot[mark=*,mark options={fill=white},black] coordinates {(0	,0.293288)
(1	,0.2555)
(2	,0.2135)
(3	,0.198)
(4	,0.177)
(5	,0.1545)
(6	,0.127)
(7	,0.0995)
(8	,0.072)
(9	,0.04945)
(10 ,0.03235)
(11 ,0.02065)
(12 ,0.013)
(13 ,0.00815)
(14 ,0.0051)
(15 ,0.003195)
(16 ,0.002005)
(17 ,0.00126)
(18 ,0.000795)
(19 ,0.0005)
(20 ,0.0003175)};

\addplot[mark=triangle*,mark options={fill=white},black] coordinates {(0, 0.775715798 )
 (1, 0.611555394 )
 (2, 0.386005181 )
 (3, 0.302489669 )
 (4, 0.282842712 )
 (5, 0.258650343 )
 (6, 0.225388553 )
 (7, 0.183575598 )
 (8, 0.13820275  )
 (9, 0.096747093 )
 (10, 0.064031242)
 (11, 0.04110961 )
 (12, 0.025980762)
 (13, 0.016309506)
 (14, 0.010198039)
 (15, 0.006403124)
 (16, 0.004024922)
 (17, 0.002527845)
 (18, 0.001593738)
 (19, 0.001004988)
(20, 0.000637966)};
    \end{axis}

   	\begin{scope}[shift={(4.5,3.2)}]
	\draw (0,0) --
		plot[mark=*, mark options={fill=white}] (0.25,0) -- (0.5,0)
		node[right]{sqrt penalty};
	\draw[yshift=\baselineskip] (0,0) --
		plot[mark=triangle*, mark options={fill=white}] (0.25,0) -- (0.5,0)
		node[right]{distance};
	\end{scope}
\end{tikzpicture}
\end{center}
Penalty (\ref{energy}) together with squared gradient term gives a kind of Ginzburg-Landau energy for smoothing of spherical harmonics fields in 3D (see \cite{Macq2020,Viertel2019}).

\section{Conclusion}
The implicit equations for the manifold of normalized spherical harmonics possessing octahedral symmetry are found.
Based on this representation the rotationally invariant measure of harmonic's deviation from considered symmetry is constructed.
The numerical example illustrating the behavior of the constructed deviation measure is given.
The obtained results have some applications in the area of directional fields design.
\newpage

\section{Appendix A.1}
The rotational matrices $R_x$, $R_y$ and $R_z$ for spherical harmonics of degree 4 are defined as follows.

\begin{Small}
\begin{equation}
R_z(\gamma) =
\begin{bmatrix}
\cos 4\gamma  & 0             & 0             & 0             & 0             & 0             & 0             & 0             &  \sin 4\gamma  \\
0             & \cos 3\gamma  & 0             & 0             & 0             & 0             & 0             &  \sin 3\gamma & 0              \\
0             & 0             & \cos 2\gamma  & 0             & 0             & 0             &  \sin 2\gamma & 0             & 0              \\
0             & 0             & 0             & \cos  \gamma  & 0             &  \sin  \gamma & 0             & 0             & 0              \\
0             & 0             & 0             & 0             & 1             & 0             & 0             & 0             & 0              \\
0             & 0             & 0             & -\sin  \gamma & 0             & \cos  \gamma  & 0             & 0             & 0              \\
0             & 0             & -\sin 2\gamma & 0             & 0             & 0             & \cos 2\gamma  & 0             & 0              \\
0             & -\sin 3\gamma & 0             & 0             & 0             & 0             & 0             & \cos 3\gamma  & 0              \\
-\sin 4\gamma & 0             & 0             & 0             & 0             & 0             & 0             & 0             & \cos 4\gamma
\end{bmatrix}
\end{equation}

\begin{equation}
R_x(\frac{\pi}{2}) = \frac{1}{8}
\begin{bmatrix}
0             & 0             & 0             & 0             & 0             &  2\sqrt{14}   & 0             & -2\sqrt{2}    & 0              \\
0             & -6            & 0             &  2\sqrt{7}    & 0             & 0             & 0             & 0             & 0              \\
0             & 0             & 0             & 0             & 0             &  2\sqrt{2}    & 0             &  2\sqrt{14}   & 0              \\
0             &  2\sqrt{7}    & 0             &  6            & 0             & 0             & 0             & 0             & 0              \\
0             & 0             & 0             & 0             &  3            & 0             &  2\sqrt{5}    & 0             &  \sqrt{35}     \\
-2\sqrt{14}   & 0             & -2\sqrt{2}    & 0             & 0             & 0             & 0             & 0             & 0              \\
0             & 0             & 0             & 0             &  2\sqrt{5}    & 0             &  4            & 0             & -2\sqrt{7}     \\
 2\sqrt{2}    & 0             & -2\sqrt{14}   & 0             & 0             & 0             & 0             & 0             & 0              \\
0             & 0             & 0             & 0             &  \sqrt{35}    & 0             & -2\sqrt{7}    & 0             &  1
\end{bmatrix}
\end{equation}
\end{Small}

\begin{equation}
R_y(\beta) = R_x(\frac{\pi}{2}) \times R_z(\beta) \times R_x(\frac{\pi}{2})^T
\end{equation}

\begin{equation}
R_x(\alpha) = R_y(\frac{\pi}{2})^T \times R_z(\alpha) \times R_y(\frac{\pi}{2})
\end{equation}

See \cite{Blanco1997,Choi1999,Collado1989,Ivanic1996} for more details.

\newpage

\section{Appendix A.2}
Topologically, the manifold of all rotations of the reference harmonic is the quotient space $SO_3 \, / \, S_4$, where $S_4$ denotes the group of order 24 of all octahedra symmetries.

It may be clearly described using the Rodriguez representation of $3D$ rotations (see \cite{Becker1989}).
The fundamental zone for octahedral symmetry in this representation has the form of a truncated cube with 6 regular octagonal faces and 8 regular triangular faces, as shown in the picture below.

\begin{center}
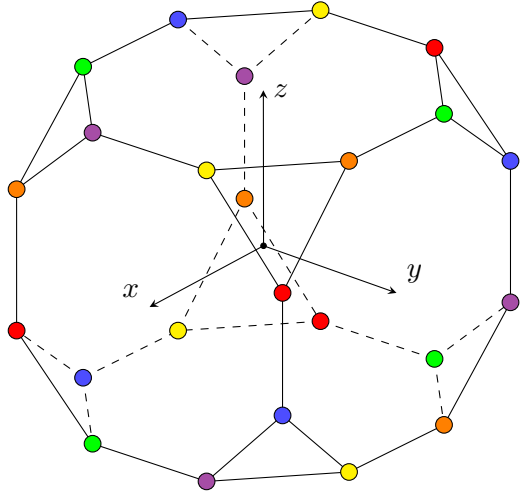
\begin{figure}[h]
\begin{tikzpicture}
\def \d {0.025cm}
\def \r {0.11cm}

\draw[xshift=160 * \d, yshift=-160 * \d]

    (-130 * \d, 30 * \d) -- (-130 * \d, -45 * \d) -- (-90 * \d, -105 * \d) -- (-30 * \d, -125 * \d) -- (45 * \d, -120 * \d) -- (95 * \d, -95 * \d) --
    (130 * \d, -30 * \d) -- (130 * \d, 45 * \d) -- (90 * \d, 105 * \d) -- (30 * \d, 125 * \d) -- (-45 * \d, 120 * \d) -- (-95 * \d, 95 * \d) -- (-130 * \d, 30 * \d)

    (-90 * \d, 60 * \d) -- (-130 * \d, 30 * \d)
    (-90 * \d, 60 * \d) -- (-95 * \d, 95 * \d)
    (-90 * \d, 60 * \d) -- (-30 * \d, 40 * \d)

    (10 * \d, -90 * \d) -- (-30 * \d, -125 * \d)
    (10 * \d, -90 * \d) -- (45 * \d, -120 * \d)
    (10 * \d, -90 * \d) -- (10 * \d, -25 * \d)

    (95 * \d, 70 * \d) -- (130 * \d, 45 * \d)
    (95 * \d, 70 * \d) -- (90 * \d, 105 * \d)
    (95 * \d, 70 * \d) -- (45 * \d, 45 * \d)

    (10 * \d, -25 * \d) -- (-30 * \d, 40 * \d)
    (45 * \d, 45 * \d) -- (10 * \d, -25 * \d)
    (-30 * \d, 40 * \d) -- (45 * \d, 45 * \d)
;

\draw[xshift=160 * \d, yshift=-160 * \d]
[dashed]

    (90 * \d, -60 * \d) -- (130 * \d, -30 * \d)
    (90 * \d, -60 * \d) -- (95 * \d, -95 * \d)
    (90 * \d, -60 * \d) -- (30 * \d, -40 * \d)

    (-10 * \d, 90 * \d) -- (30 * \d, 125 * \d)
    (-10 * \d, 90 * \d) -- (-45 * \d, 120 * \d)
    (-10 * \d, 90 * \d) -- (-10 * \d, 25 * \d)

    (-95 * \d, -70 * \d) -- (-130 * \d, -45 * \d)
    (-95 * \d, -70 * \d) -- (-90 * \d, -105 * \d)
    (-95 * \d, -70 * \d) -- (-45 * \d, -45 * \d)

    (-10 * \d, 25 * \d) -- (30 * \d, -40 * \d)
    (-45 * \d, -45 * \d) -- (-10 * \d, 25 * \d)
    (30 * \d, -40 * \d) -- (-45 * \d, -45 * \d)
;

\begin{scope}[shift={(160 * \d,-160 * \d)}]
\draw[fill=orange] (-130 * \d, 30 * \d) circle (\r);
\draw[fill=red] (-130 * \d, -45 * \d) circle (\r);
\draw[fill=green] (-90 * \d, -105 * \d) circle (\r);
\draw[fill=violet!70] (-30 * \d, -125 * \d) circle (\r);
\draw[fill=yellow] (45 * \d, -120 * \d) circle (\r);
\draw[fill=orange] (95 * \d, -95 * \d) circle (\r);
\draw[fill=violet!70] (130 * \d, -30 * \d) circle (\r);
\draw[fill=blue!70] (130 * \d, 45 * \d) circle (\r);
\draw[fill=red] (90 * \d, 105 * \d) circle (\r);
\draw[fill=yellow] (30 * \d, 125 * \d) circle (\r);
\draw[fill=blue!70] (-45 * \d, 120 * \d) circle (\r);
\draw[fill=green] (-95 * \d, 95 * \d) circle (\r);

\draw[fill=violet!70] (-90 * \d, 60 * \d) circle (\r);
\draw[fill=blue!70] (10 * \d, -90 * \d) circle (\r);
\draw[fill=green] (95 * \d, 70 * \d) circle (\r);
\draw[fill=green] (90 * \d, -60 * \d) circle (\r);
\draw[fill=violet!70] (-10 * \d, 90 * \d) circle (\r);
\draw[fill=blue!70] (-95 * \d, -70 * \d) circle (\r);

\draw[fill=red] (10 * \d, -25 * \d) circle (\r);
\draw[fill=orange] (45 * \d, 45 * \d) circle (\r);
\draw[fill=yellow] (-30 * \d, 40 * \d) circle (\r);
\draw[fill=orange] (-10 * \d, 25 * \d) circle (\r);
\draw[fill=yellow] (-45 * \d, -45 * \d) circle (\r);
\draw[fill=red] (30 * \d, -40 * \d) circle (\r);

\draw[fill=black] (0,0) circle (0.33 * \r);
\draw[->,>=stealth] (0,0) --  (-60 * \d, -32.5 * \d) node[anchor=south east] {$x$};
\draw[->,>=stealth] (0,0) --  (70 * \d, -25 * \d) node[anchor=south west] {$y$};
\draw[->,>=stealth] (0,0) --  (-0.1 * \d, 82.5 * \d) node[anchor=west] {$z$};

\end{scope}

\end{tikzpicture}
\caption{ Fundamental zone for rotations with octahedral symmetry. }
\label{fig:rodr}
\end{figure}
\end{center}

The topology we consider is obtained by gluing the opposite octagons and the opposite triangles with $45^\circ$ and $60^\circ$ turn respectively.
The colors in the picture indicate how the vertices map to each other.

\newpage

\bibliographystyle{plain}
\bibliography{lit}

\end{document}